\begin{document}
\title{Competitive caching with machine learned advice}

\author{
 Thodoris Lykouris\thanks{Microsoft Research, \texttt{thlykour@microsoft.com}. Work supported under NSF grant CCF-1563714. This work was done while the author was a Ph.D. student at Cornell University and started while the author was interning at Google.} \and 
 Sergei Vassilvitskii\thanks{Google Research, \texttt{sergeiv@google.com}.}}

\date{First version: February 2018\\Current version: August 2020%
\footnote{Preliminary version of the paper appeared at the 35th International Conference on Machine Learning (ICML 2018). The current version improves the presentation of the suggested framework (Section~\ref{sec:model}), provides a more clear discussion on how it can be more broadly applied, and fixes some more minor presentation issues in other sections.}}

\maketitle

\begin{abstract}

 Traditional online algorithms encapsulate decision making under uncertainty and give ways to hedge against all possible future events, while guaranteeing a nearly optimal solution, as compared to an offline optimum.  On the other hand, machine learning algorithms are in the business of extrapolating patterns found in the data to predict the future, and usually come with strong guarantees on the expected generalization error. 

In this work we develop a framework for augmenting online algorithms with a machine learned oracle to achieve competitive ratios that provably improve upon unconditional worst case lower bounds when the oracle has low error. Our approach treats the oracle as a complete black box, and is not dependent on its inner workings, or the exact distribution of its errors. 

We apply this framework to the traditional caching problem—creating an eviction strategy
for a cache of size $k$. We demonstrate that naively following the oracle's recommendations
may lead to very poor performance, even when the average error is quite low. Instead we show how to modify the Marker algorithm to take into account the oracle's predictions, and prove that this combined approach achieves a competitive ratio that both decreases as the oracle's error decreases, and is always capped by $O(\log k)$, which can be
achieved without any oracle input. We complement our results with an empirical evaluation of our algorithm on real world datasets, and show that it performs well empirically even when using simple off-the-shelf predictions.

\end{abstract}

\addtocounter{page}{-1}
\thispagestyle{empty}
\newpage

\section{Introduction}
\label{sec:intro}

Despite the success and prevalence of machine learned systems across many application domains, there are still a lot of hurdles that one needs to overcome to deploy an ML system in  practice~\cite{MLTechDebt}. As these systems are rarely perfect, a key challenge is dealing with errors that inevitably arise.

There are many reasons that learned systems may exhibit errors when deployed. First, most of them are trained to be good {\em on average}, minimizing some expected loss. In doing so, the system may  
invest its efforts in reducing the error on the majority of inputs, at the 
expense of increased error on a handful of outliers. Another problem is
that generalization error guarantees only apply when the train and test examples are drawn from the same distribution. If this assumption is violated, either due to distribution drift or adversarial examples~\cite{FoolingNNs}, the machine learned predictions may be very far from the truth. In all cases, any system backed by machine learning needs to be robust enough to handle occasional errors. 

While machine learning is in the business of predicting the unknown, online algorithms provide guidance on how to act without \emph{any} knowledge of future inputs. These powerful methods show how to hedge decisions so that, regardless of what the future holds, the online algorithm performs nearly as well as the optimal offline algorithm. However these guarantees come at a cost: since they protect against the worst case, online algorithms may be overly cautious, which translates to high competitive ratios  even for seemingly simple problems. 

In this work we ask:

\begin{center}\emph{What if the online algorithm is equipped with a machine learned oracle? How can one use this oracle to combine the predictive power of machine learning with the robustness of online algorithms?}
\end{center}

We focus on a prototypical example of this area: the online paging, or \emph{caching} problem. In this setting, a series of requests arrives one at a time to a server equipped with a small amount of memory. Upon processing a request, the server places the answer in the memory (in case an identical request comes in the future). Since the local memory has limited size, the server must decide which of the current elements to evict. It is well known that if the local memory or \emph{cache} has size $k$, then any deterministic algorithm incurs competitive ratio $\Omega(k)$. However, an $O(k)$ bound can be also achieved by almost any reasonable strategy, thus this metric fails to distinguish between algorithms that perform well in practice and those that perform poorly. The competitive ratio of the best randomized algorithm is $\Theta(\log k)$ which, despite its elegant analysis, is much higher than what is observed on real inputs. 

In contrast, we show how to use machine learned predictions to achieve a competitive ratio of $2 + O( \min( \sqrt{\nicefrac{\eta}{\opt}}, \log k)$, when using a predictor with total prediction error of $\eta$, where $\opt$ is the value of the offline optimal solution --- see Section \ref{ssec:predictive_marker} for a precise statement of results. Thus when the predictions are accurate (small $\eta$), our approach circumvents the worst case lower bounds. On the other hand, even when the oracle is inaccurate (large $\eta$), the performance degrades gracefully to almost match the worst case bound.

\subsection{Our contribution}
The conceptual contribution of the paper lies in formalizing the interplay between machine learning and competitive analysis by introducing a general framework (Section \ref{sec:general_omla}), and a set of desiderata for online algorithms that use a machine learned oracle.

We look for approaches that:
\begin{itemize}
    \item Make \emph{minimal} assumptions on the machine learned oracle. Specifically, since most machine learning guarantees are on the expected performance, our results are parametric as a function of the error of the oracle, $\eta$, and not the distribution of the error. 
    \item Are \emph{robust}: a better oracle (one with lower $\eta$) results in a smaller competitive ratio
    \item Are worst-case \emph{competitive}: no matter the performance of the oracle on the particular instance, the algorithm behaves comparably to the best online algorithm for the problem.
\end{itemize}

We instantiate the general framework to the online caching problem, specifying the prediction made by the oracle, and presenting an algorithm that uses that prediction effectively (Section \ref{ssec:predictive_marker}). Along the way we show that algorithmic innovation is necessary: simply following the recommendations of the predictor may lead to poor performance, even when the average error is small (Section \ref{ssec:blind_oracle_lower_bounds_robustness}). Instead, we adapt the Marker algorithm~\cite{Fiat1991} to carefully incorporate the feedback of the predictor. The resulting approach, which we call {\em Predictive Marker} has guarantees that capture the best of both worlds: the algorithm performs better as the error of the oracle decreases, but performs nearly as well as the best online algorithm in the worst case. 

Our analysis generalizes to multiple loss functions (such as absolute loss and squared loss). This freedom in the loss function with the black-box access to the oracle allows our results to be strengthened with future progress in machine learning and reduces the task of designing better algorithms to the task of finding better  predictors.

We complement our theoretical findings with empirical results (Section~\ref{sec:empirical}). We test the performance of our algorithm on public data using off-the-shelf machine learning models.  We compare the performance to the Least Recently Used (LRU) algorithm, which serves as the gold standard, the original Marker algorithm, as well as directly using the predictor. In all cases, the Predictive Marker algorithm outperforms known approaches. 

Before moving to the main technical content, we provide a simple example that highlights the main concepts of this work. 

\subsection{Example: Faster Binary Search}
\label{sec:example}
Consider the classical binary search problem. Given a sorted array $A$ on $n$ elements and a query element $q$, the goal is to either find the index of $q$ in the array, or state that it is not in the set. The textbook method is binary search: compare the value of $q$ to that of the middle element of $A$, and recurse on the correct half of the array. After $O(\log n)$ probes, the method either finds $q$ or returns. 

Instead of applying binary search, one can train a classifier, $h$, to predict the position of $q$ in the array. (Although this may appear to be overly complex, Kraska et.al~\cite{PredictiveIndexing} empirically demonstrate the advantages of such a method.) How to use such a classifier? A simple approach is to first probe the location at $h(q)$; if $q$ is not found there, we immediately know whether it is smaller or larger. Suppose $q$ is larger than the element in $A[h(q)]$ and the array is sorted in increasing order. We probe elements at $h(q) + 2, h(q) + 4, h(q) + 8$, and so on, until we find an element larger than $q$ (or we hit the end of the array). Then we simply apply binary search on the interval that's guaranteed to contain $q$. 

What is the cost of such an approach? Let $t(q)$ be the true position of $q$ in the array (or the position of the largest element smaller than $q$ if it is not in the set). The absolute loss of the classifier on $q$ is then $\epsilon_q = |h(q) - t(q)|$.
On the other hand, the cost of running the above algorithm starting at $h(q)$ is at most $2(\log{ \abs{h(q)-t(q)}}) = 2 \log \epsilon_q$. 

If the queries $q$ come from a distribution, then the expected cost of the algorithm is:
$$2 \mathbb{E}_q \Big[\log \left(|h(q) - t(q)|\right)\Big] \leq 2 \log \mathbb{E}_q \Big[|h(q) - t(q)|\Big] = 2 \log \mathbb{E}_q[\epsilon_q],$$

where the inequality follows by Jensen's inequality. This gives a trade-off between the performance of the algorithm and the absolute loss of the predictor. Moreover, since $\epsilon_q$ is trivially bounded by $n$, this shows that even relatively weak classifiers (those with average error of $\sqrt{n}$) this can lead to an improvement in asymptotic performance.

\subsection{Related work}
\label{ssec:related}
Our work builds upon the foundational work on competitive analysis and online algorithms; for a great introduction see the book by Borodin and El-Yaniv~\cite{BorodinBook}. Specifically, we look at the standard caching problem, see, for example, ~\cite{Motwani1995}. While many variants of caching have been studied over the years, our main starting point will be the Marker algorithm by Fiat et al. ~\cite{Fiat1991}. 

As we mentioned earlier, competitive analysis fails to distinguish between algorithms that perform well in practice, and those that perform well only in theory. Several fixes have been proposed to address these concerns, ranging from resource augmentation, where the online algorithm has a larger cache than the offline optimum ~\cite{ResourceAugmentation}, to models of real-world inputs that restrict the inputs analyzed by the algorithm, for example, insisting on locality of reference ~\cite{LocalityOfReference}, or the more general Working Set model ~\cite{Denning1968}. 

The idea of making assumptions on the nature of the input to prove better bounds is common in the literature. The most popular of these is that the data arrive in a random order. This is a critical assumption in the secretary problem, and, more generally, in other streaming algorithms, see for instance the survey by McGregor~\cite{GraphStream}. While the assumption leads to algorithms with useful insight into the structure of the problem, it rarely holds true, and is often hard to verify.

Another common assumption on the structure of the input gives rise to Smoothed Analysis, introduced in a pioneering work by Spielman and Teng~\cite{SpielmanTeng2004} explaining the practical efficiency of the Simplex method. This approach assumes that any worst case instance is perturbed slightly before being passed to the algorithm;  the idea is that this perturbation may be due to measurement error, or some other noise inherent in the data. The goal then is to show that the worst case inputs are brittle, and do not survive addition of random noise. Since its introduction this method has been used to explain the unusual effectiveness of many practical algorithms such as ICP~\cite{ArthurVassilvitskii2006}, Lloyd's method~\cite{Arthur+2011}, and local search~\cite{Englert+2016}, in the face of exponential worst case bounds.

The prior work that is closest in spirit to ours looks for algorithms that optimistically assume that the input has a certain structure, but also have worst case guarantees when that fails to be the case. One such assumption is that the data are coming from a stochastic distribution and was studied in the context of online matching \cite{soda/MirrokniGZ12} and bandit learning \cite{BubeckS12}; both of these works provide improved guarantees if the input is stochastic but retain the worst-case guarantees otherwise. 
Subsequent work has provided a graceful decay in performance when the input is mostly stochastic (analogous to our robustness property) both in the context online matching \cite{esfandiari2015online} and bandit learning \cite{LykourisMiPa18}. In a related note, Ailon et al.~\cite{Ailon+2011} consider ``self-improving'' algorithms that effectively learn the input distribution, and adapt to be nearly optimal in that domain. Contrasting to these works, our approach utilizes a different structure in the data: the fact that the sequence can be predicted.

Our work is not the first to use predictions to enhance guarantees in online decision-making. The ability to predict something about the input has also used been used in online learning by Rakhlin and Sridharan \cite{RakhlinSr13_predictable} where losses of next round are predicted and the guarantees scale with how erroneous these precitions are. Our focus is on competitive analysis approaches where requests affect the state of the system; as a result, a single misprediction can have long-lasting effect on the system. With respect to using predictions in competitive analysis, another approach was suggested by Mahdian et al.~\cite{MahdianNazerzadehSaberi2012}, who assume the existence of an optimistic, highly competitive, algorithm, and then provide a meta algorithm with a competitive ratio that interpolates between that of the worst-case algorithm and that of the optimistic one.
This work is most similar to our approach, but it ignores two key challenges that we face: (i) identifying predictions that can lead to (near) offline optimality, and (ii)  developing algorithms that use these predictions effectively in a robust way.
The work of Mahdian et al. \cite{MahdianNazerzadehSaberi2012} starts directly from the point where such an ``optimistic'' algorithm is available, and combines it with a ``good in the worst-case''  algorithm in a black-box manner. This has similarities to the approaches we discuss in Section~\ref{ssec:black_box} and  Remark~\ref{rem:fiat_reference} but does not answer how to develop the optimistic algorithm. As we show in the paper, developing such  algorithms may be non-trivial even when the predictions are relatively good.

In other words, we do not assume anything about the data, or the availability of good algorithms that work in restricted settings.
Rather, we use the predictor to implicitly classify instances into ``easy'' and ``hard'' depending on their predictability. The ``easy'' instances are those on which the latest machine learning technology, be it perceptrons, decision trees, SVMs, Deep Neural Networks, GANs, LSTMs, or whatever else may come in the future, has small error. On these instances our goal is to take advantage of the predictions, and obtain low competitive ratios. (Importantly, our approach is completely agnostic to the inner workings of the predictor and treats it as a black box.) The ``hard'' instances, are those where the prediction quality is poor, and we have to rely more on classical competitive analysis to obtain good results. 

A previous line of work has also considered the benefit of enhancing online algorithms with oracle advice (see \cite{Boyar2016_survey} for a recent survey). This setting assumes access to an infallible oracle and studies the amount of information that is needed to achieve desired competitive ratio guarantees. Our work differs in two major regards. First, we do not assume that the oracle is perfect, as that is rarely the case in machine learning scenarios. Second, we study the trade-off between oracle error and the competitive ratio, rather than focusing on the number of perfect predictions necessary. 

Another avenue of research close to our setting asks what happens if the algorithm cannot view the whole input, but must rely on a sample of the input to make its choices.  Introduced in the seminal work of Cole and Roughgarden~\cite{ColeRoughgarden2014}, this notion of Learning from Samples, can be viewed as first designing a good prediction function, $h$, and then using it in the algorithms. Indeed, some of the follow up work~\cite{MorgensternR16, Balkanski}  proves tight bounds on precisely how many samples are necessary to achieve good approximation guarantees. In contrast, we assume that the online algorithm is given access to a machine learned predictor, but does not know any details of its inner workings---we know neither the average performance of the predictor, nor the distribution of the errors. 

Very recently, two papers explored domains similar to ours. Medina and Vassilvitskii ~\cite{MedinaVassilvitskii2017} showed how to use a machine learned oracle to  optimize revenue in repeated posted price auctions. Their work has a mix of offline calculations and online predictions and focuses on the specific problem of revenue optimization. Kraska et al.~\cite{PredictiveIndexing} demonstrated empirically that introducing machine learned components to classical algorithms (in their case index lookups) can result in significant speed and storage gains. Unlike this work, their results are experimental, and they do not provide trade-offs on the performance of their approach vis-\`a-vis the error of the machine learned predictor.

Finally, since the publication of the original paper, learning augmented algorithms has emerged as a rich area. Subsequently to our work, researchers have studied how to incorporate machine learned predictions in other settings such as ski rental \cite{purohit2018improving,GollapudiPanigrahi19}, 
scheduling \cite{purohit2018improving,angelopoulos2019online,LattanziLavastidaMoseleyVassilvitskii20}, bin packing \cite{angelopoulos2019online}, bloom filters ~\cite{Mitzenmacher_bloom, partitionedbloom2020}, queueing \cite{queueing2020},
streaming algorithms ~\cite{HsuIndykKatabiVakilian19}, weighted paging \cite{JiangPanigrahiSun20}, and page migration \cite{indyk2020online}. While many of these focus on improving competitive ratios, some of them explore other performance metrics, such as space complexity~\cite{HsuIndykKatabiVakilian19, Mitzenmacher_bloom, partitionedbloom2020}. With respect to the unweighted caching problem we consider, subsequent work has also provided refined guarantees under our prediction model \cite{Rohatgi20,Wei20} or alternate prediction models \cite{AntoniadisCoesterEliasPolakSimon20}.

\section{Online Algorithms with Machine Learned Advice}
\label{sec:general_omla}
In this section, we introduce a general framework for combining online algorithms with machine learning predictions, which we term \emph{Online with Machine Learned Advice} framework (OMLA). Before introducing the framework, we review some basic notions from machine learning and online algorithms. 

\subsection{Preliminaries}
\paragraph{Machine learning basics.}\label{ssec:ml_basics}
We are given a feature space $\mathcal{X}$, describing the salient characteristics of each item and a set of labels $\mathcal{Y}$. An example is a pair $(x,y)$, where $x\in \mathcal{X}$ describes the specific features of the example, and $y\in \mathcal{Y}$ gives the corresponding label. In the binary search example, $x$ can be thought as the query element $q$ searched and $y$ as its true position $t(x)$.

A hypothesis is a mapping $h:\mathcal{X}\rightarrow \mathcal{Y}$ and can be probabilistic in which case the output on $x\in \mathcal{X}$ is some probabilistically chosen $y\in \mathcal{Y}$. In binary search, $h(x)$ corresponds to the predicted position of the query.

To measure the performance of a hypothesis, we first define a loss function $\ell:\mathcal{Y}\times \mathcal{Y}\rightarrow \mathbb{R}^{\geq 0}$.
When the labels lie in a metric space, we define absolute loss $\ell_{1}(y,\hat{y})=\abs{y-\hat{y}}$, squared loss $\ell_{2}(y,\hat{y})=(y-\hat{y})^2$, and, more generally,  classification loss $\ell_{c}(y, \hat{y}) = \mathbf{1}_{y\not=\hat{y}}$. 

\paragraph{Competitive analysis.}
To obtain worst-case guarantees for an online algorithm (that must make decisions as each element arrives), we compare its performance to that of an offline optimum (that has the benefit of hindsight). Let $\sigma$ be the input sequence of elements for a particular online decision-making problem, $\cost_A(\sigma)$ be the cost incurred by an online algorithm $\mathcal{A}$ on this input, and $\opt(\sigma)$ be the cost incurred by the optimal offline algorithm. Then algorithm $\mathcal{A}$ has competitive ratio $\textsc{cr}$ if for all sequences $\sigma$,
$$\cost_
{\mathcal{A}}(\sigma)\leq
\textsc{cr}\cdot \opt(\sigma).$$ 
If the algorithm $\mathcal{A}$ is randomized then $\cost_{\mathcal{A}}(\sigma)$ corresponds to the expected cost of the algorithm in input $\sigma$ where the expectation is taken over the randomness of the algorithm.

\paragraph{The Caching Problem.} The caching (or online paging) problem considers a system with two levels of memory: a slow memory of size $m$ and a fast memory of size $k$. A caching algorithm is faced with a sequence of requests for elements. If the requested element is in the fast memory, a {\em cache hit} occurs and the algorithm can satisfy the request at no cost. If the requested element
is not in the fast memory, a {\em cache miss} occurs, the algorithm fetches the element from the slow memory, and places it in the fast memory before satisfying the request. If the fast memory is full, then one of the elements
must be evicted. The eviction strategy forms the core of the  problem. The goal is to find an eviction policy that results in the fewest number of cache misses. 

It is well known that the optimal offline algorithm at time $t$ evicts the element from the cache that will arrive the furthest in the future; this is typically referred to in the literature as B\'el\'ady's optimal replacement paging algorithm \cite{Belady1966}. On the other hand, without the benefit of foresight, any deterministic caching algorithm achieves a competitive ratio of $\Omega(k)$, and any randomized caching algorithm achieves a competitive ratio of $\Omega(\log k)$~\cite{Motwani1995}. 

\subsection{OMLA Definition.}
\label{sec:model}
To define our framework in generality, we consider a general problem setting associated with a general prediction model and then explain how both can be instantiated in the context of caching.

In traditional online algorithms, there is an universe $\mathcal{Z}$ and elements $z_i\in\mathcal{Z}$ arrive one at a time for rounds $i=1,2,\ldots$. The problem $\Pi$ specifies the optimization problem at hand, along with the required constraints and any necessary parameters. 
For example, in the problem of caching studied in this paper, $\Pi_{\textsc{caching}}=\textsc{Caching}(n,k)$, is parametrized by the number of requests $n$ and the cache size $k$.

\paragraph{Augmenting online algorithms with machine learned predictors.} In our framework, we assume that the requested elements are augmented with a feature space $\mathcal{X}$ (discussed below). We refer to the resulting feature-augmented elements as \emph{items} and denote the item of the $i$-th request by $\sigma_i$. An input $\sigma\in\Pi$ corresponds to a sequence of items: $\sigma=(\sigma_1,\sigma_2,\ldots)$. For the problem of caching $\Pi_{\textsc{caching}}=\textsc{Caching}(n,k)$, the item sequence $\sigma$ has length $n$.

Each item is associated with a particular element by $z(\sigma_i)\in\mathcal{Z}$ as well as a feature $x(\sigma_i)\in\mathcal{X}$. The features capture any information that may be available to the machine learning algorithm to help provide meaningful predictions. In caching, these may include information about the sequence prior to the request, time patterns associated to the particular request, or any other information. We note that even for caching, items are more general than their associated element: two items with the same element are \emph{not} necessarily the same as their corresponding features may differ.

\paragraph{Prediction model.} The prediction model $\mathcal{H}$ prescribes a label space $\mathcal{Y}$; the $i$-th item
has label $y(\sigma_i)\in\mathcal{Y}$. This label space can be viewed as the information needed to solve the task (approximately) optimally. As we discuss in the end of Section~\ref{sec:model}, deciding on a particular label space is far from trivial and it often involves trade-offs between learnability and accuracy.

Given a prediction model $\mathcal{H}$ determining a label space $\mathcal{Y}$, a machine learned predictor $h\in\mathcal{H}$ maps features $x\in\mathcal{X}$ to predicted labels $h(x)\in\mathcal{Y}$. In particular, for item
$\sigma_i$, the predictor $h$ returns a predicted label $h(x(\sigma_i))$. To ease notation we
denote this by $h(\sigma_i)$. Here we assume that this mapping from features to labels is deterministic; our results extend to randomized mappings by applications of Jensen's inequality (see Section~\ref{sec:randomized_predictors}).

\paragraph{Loss functions and error of predictors.} To evaluate the performance of a predictor on a particular input, we consider a loss function $\ell$. Similar to the prediction model, selecting a loss function involves trade-offs between learnability of the predictor and resulting performance guarantees; we elaborate on these trade-offs in the end of Section~\ref{sec:model}. For a given loss function $\ell$, problem $\Pi$, and prediction model $\mathcal{H}$, the performance of the predictor $h\in\mathcal{H}$ on input $\sigma\in\Pi$ is evaluated by its error $\eta_\ell(h,\sigma)$. In full generality, this error can depend on the whole input in complicated ways.

For the caching problem, the prediction model we consider predicts the subsequent time that a requested element will get requested again. In this case, a natural loss function such as absolute or squared loss decomposes the error across items.
In later sections, we focus on such loss functions throughout this paper and therefore can express the error as:
$$
\eta_{\ell}(h,\sigma)=\sum_i \ell\prn*{y(\sigma_i), h(\sigma_i)}. $$
Instantiated with the absolute loss function, the error of the predictor is $\eta_{\ell_1}(h,\sigma)=\sum_i |y(\sigma_i) - h(\sigma_i)|$. We will use $\eta_1(h,\sigma)$ as a shorthand for this absolute loss.

We note that this decomposition across items may not be possible. For example, edit distance does not decompose across items but needs to be evaluated with respect to the whole instance. The general framework we define extends to such non-decomposable loss functions but the above restriction lets us better describe our results and draws more direct connection with classical machine learning notions.

We now summarize the general concepts of our framework in the following definition.

\begin{definition}\label{defn:omla}
\vspace{0.1in}
The {\em Online with Machine Learned Advice} (OMLA) framework is defined with respect to a) a problem $\Pi$, b) a prediction model $\mathcal{H}$ determining a feature space $\mathcal{X}$ and a label space $\mathcal{Y}$, and c) a loss function $\ell$. An instance consists of:
\begin{itemize}
    \item An input $\sigma\in\Pi$ consisting of items
    $\sigma_i$ arriving online, each with features $x(\sigma_i)\in\mathcal{X}$ and label $y(\sigma_i)\in\mathcal{Y}$.
    \item A predictor $h :\mathcal{X} \rightarrow \mathcal{Y}$ that predicts a label $h(\sigma_i)$ for each $x(\sigma_i) \in \mathcal{X}$.
    \item The error of predictor $h$ at sequence $\sigma$ w.r.t. loss $\ell$, $\eta_{\ell}(h,\sigma)$.
\end{itemize}
\end{definition}

Our goal is to create online algorithms that, when augmented with a predictor $h$, can use its advice to achieve 
an improved competitive ratio. To evaluate how well an algorithm $\mathcal{A}$ performs with respect to this task, we extend the definition of competitive ratio to be a function of the predictor's error. We first define the set of predictors that are sufficiently accurate. 

\begin{definition}\label{defn:epsilon_accurate}\vspace{0.1in}
For a fixed optimization problem $\Pi$, let $\opt_\Pi(\sigma)$ denote the value of the optimal solution on the input $\sigma$. Consider a prediction model $\mathcal{H}$.
A predictor $h\in \mathcal{H}$ is {\em $\epsilon$-accurate} with respect to a loss function $\ell$ for $\Pi$ if for any $\sigma\in\Pi$:
$$\eta_{\ell,\mathcal{H},\Pi}(h, \sigma) \leq \epsilon \cdot \opt_\Pi (\sigma).$$
We will use $\mathcal{H}_{\ell,\mathcal{H},\Pi}(\epsilon)$ to denote the class of $\epsilon$-accurate predictors. 
\end{definition}

At first glance, it may appear unnatural to tie the error of the prediction to the value of the optimal solution. However, our goal is to have a definition that is invariant to simple padding arguments. For instance, consider a sequence $\sigma' = \sigma \sigma$, which concatenates two copies of an input $\sigma$.~\footnote{In order for both instances to be equally sized and therefore be inputs of the same problem $\Pi$, we can think of padding the end of the first instance with the same dummy request.} It is clear that the prediction error of any predictor doubles, but this is not due to the predictor suddenly being worse. One could instead normalize the prediction error by the length of the sequence, but in many problems, including caching, one can artificially increase the length of the sequence without impacting the value of the optimum solution, or the impact of predictions. Normalizing by the value of the optimum addresses both of these problems. 

Call an algorithm $\mathcal{A}$ {\em $\epsilon$-assisted} if it has access to an $\epsilon$-accurate predictor. The competitive ratio of an  $\epsilon$-assisted algorithm is itself a function of $\epsilon$ and may also depend on parameters specified by $\Pi$ such as the cache size $k$ or the number of elements $n$.

\begin{definition}\label{defn:predictor_augmented_cr}
\vspace{0.1in} 
For a fixed optimization problem $\Pi$ and a prediction model $\mathcal{H}$, let $\textsc{inputCR}_{\mathcal{A},\mathcal{H},\Pi}(h,\sigma)$
be the competitive ratio of algorithm $\mathcal{A}$ that
uses a predictor $h\in\mathcal{H}$ when applied on an input $\sigma\in\Pi$.
The \emph{competitive ratio} of an $\epsilon$-assisted
algorithm  $\mathcal{A}$ for problem $\Pi$ with respect to loss function $\ell$ and prediction model $\mathcal{H}$ is:
$$
\textsc{cr}_{\mathcal{A},\ell,\mathcal{H},\Pi}(\epsilon)=\max_{\sigma\in \Pi,h\in \mathcal{H}_{\ell,\mathcal{H},\Pi}(\epsilon)}\textsc{inputCR}_{\mathcal{A},\mathcal{H},\Pi}(h,\sigma).
$$
\end{definition}
We now define the desiderata that we wish our algorithm to satisfy. 
We would like our algorithm to perform as well as the offline optimum when the predictor is perfect, degrade gracefully with the error of the predictor, and perform as well as the best online algorithm regardless of the error of the predictor. We define these properties formally for the performance of an algorithm $\mathcal{A}$ a particular loss function $\ell$, prediction model $\mathcal{H}$, and problem $\Pi$. 

\begin{definition} 
\vspace{0.1in}
$\mathcal{A}$ is $\beta$-{\em consistent} if $\textsc{cr}_{\mathcal{A},\ell,\mathcal{H},\Pi}(0) = \beta$.  
\end{definition}

\begin{definition}
\vspace{0.1in}
$\mathcal{A}$ is $\alpha$-{\em robust} for a function $\alpha(\cdot)$, if $\textsc{cr}_{\mathcal{A},\ell,\mathcal{H},\Pi}(\epsilon) = O(\alpha(\epsilon))$. 
\end{definition}

\begin{definition}
\vspace{0.1in}
$\mathcal{A}$ is $\gamma$-{\em competitive} if $\textsc{cr}_{\mathcal{A},\ell,\mathcal{H},\Pi}(\epsilon)\leq \gamma$ for all values of $\epsilon$.
\end{definition}

Our goal is to find algorithms that simultaneously optimize the aforementioned three properties.
First, they are ideally $1$-consistent: recovering the optimal solution when the predictor is perfect. This is not necessarily feasible for multiple reasons. From a computational side, the underlying problem may be NP-hard. Moreover, achieving any notion of robustness may inevitably be at odds with exact consistency. As a result, we are satisfied with $\beta$-consistency for some small constant $\beta$. Second, they are $\alpha(\cdot)$-robust for a slow growing function $\alpha$: seamlessly handling errors in the predictor. This function depends on the exact prediction model and the way that the loss is defined with respect to it. As discussed below, different prediction models and loss functions may well lead to different robustness guarantees while also achieve different levels of learnability. Finally, they are  worst-case competitive: they perform as well as the best online algorithms even when the predictor's error is high. As hinted before, all competitive ratios can be functions of the problem dimensions inherent in $\Pi$; for example, in caching, the worst-case performance $\gamma$ needs to depend on the cache size $k$. Ideally, the consistency and robustness quantities $\beta$ and $\alpha(\epsilon)$ (for small $\epsilon>0$) do not display such dependence on these problem dimensions.

\paragraph{Discussion on the OMLA framework.} 
For the caching problems predictions and loss functions as decomposable per element, but one can also define predictions with respect to different parts of the instance. For example, subsequent works used strong lookahead for weighted paging
\cite{JiangPanigrahiSun20} and learned weights for scheduling \cite{LattanziLavastidaMoseleyVassilvitskii20} -- both of these prediction models are not per-element. Similarly, loss functions can be computed with respect to the complete instance.
Per-item predictions, however, have a stronger connection to classical machine learning terminology.

Next, thus far we have disregarded the question of where the predictor comes from and how learnable it is. This is an important question and has been elegantly discussed in multiple contexts such as revenue maximization \cite{ColeRoughgarden2014}. In general, the decision on both the prediction model $\mathcal{H}$ and the loss function $\ell$ needs to take into account the learnability question and better understanding of the exact trade-offs is a major open direction of our work. Subsequent work sheds further light on the learnability question in the context of our framework \cite{AnandGePanigrahi20}.

Finally, although we define our framework with respect to competitive analysis, predictions can be useful to augment online algorithm design with respect to other metrics such as space complexity~\cite{HsuIndykKatabiVakilian19, Mitzenmacher_bloom, partitionedbloom2020} and our framework can be easily extended to capture such performance gains.

\subsection{Caching with ML Advice}

In order to instantiate the framework to the caching problem, we need to specify the items
of the input sequence $\sigma$, the prediction model $\mathcal{H}$ (and thereby the label space $\mathcal{Y}$), as well as the loss function $\ell$. Each item
corresponds to one request $\sigma_i$; the latter is associated with an element $z(\sigma_i)\in\mathcal{Z}$  and features $x(\sigma_i)\in\mathcal{X}$ that encapsulate any information available to the machine learning algorithm. The element space $\mathcal{Z}$ consists of the $m$ elements of the caching problem defined in Section~\ref{ssec:ml_basics}. The exact choice of the feature space $\mathcal{X}$ is orthogonal to our setting, though of course richer features typically lead to smaller errors. The input sequence $\sigma=(\sigma_1,\sigma_2,\ldots)$ of the requested items is assumed to be fixed in advance and is oblivious to the realized randomness of the algorithm but unknown to the algorithm.

The main design choice of the prediction model is the question of what to predict which is captured in our framework by the choice of the label space. For caching problems, a natural candidate is predicting the next time a particular element is going to appear. It is well known~\cite{Belady1966} that when such predictions are perfect, the online algorithm can recover the best offline optimum. 
Formally, the label space $\mathcal{Y}$ we consider is a set of positions in the sequence, $\mathcal{Y} = \mathbb{N}^+$. Given a sequence $\sigma$, the label of the $i$-th element is $y(\sigma_i) = \min_{t>i}\crl*{t: x(\sigma_t) = x(\sigma_i)}$. If the element is never seen again, we set $y(\sigma_i) = n + 1$. Note that $y(\sigma_i) $ is completely determined by the sequence $\sigma$. We use ${h}(\sigma_i)$ to denote the outcome of the prediction on an element with features $x(\sigma_i)$. Note that the feature is not only a function of the element identity $z(\sigma_i)$; when an element reappears, its features may be drastically different.

In what follows, we fix the problem $\Pi=\textsc{Caching}(n,k)$ to a caching problem with $n$ requests and cache size $k$ and the prediction model $\mathcal{H}$ to be about the next appearance of a requested element. We consider a variety of loss functions (discussed in detail in Section~\ref{ssec:analysis_pred_marker}) that capture for example absolute and squared loss functions. To ease notation, we therefore drop any notational dependence on the prediction model $\mathcal{H}$ and the problem $\Pi$ as both are fixed throughout the rest of the paper, but keep the dependence on the loss function $\ell$.

\section{Main result: Predictive Marker}
\label{sec:result}
In this section, we describe the main result: an algorithm that satisfies the three desiderata of the previous section. Before describing our algorithm, we show that combining the predictions with ideas from competitive analysis is to a large extent essential; blindly evicting the element that is predicted the furthest in the future by the predictor (or simple modifications of this idea) can result in poor performance both with respect to robustness and competitiveness.

\subsection{Blindly following the predictor is not sufficient}\label{ssec:blind_oracle_lower_bounds_robustness}
\paragraph{Evicting element predicted the furthest in the future.} An immediate way to use the predictor is to treat its output as truth and optimize based on the whole predicted sequence. This corresponds to
B\'el\'ady rule that evicts the element predicted to appear the furthest in the future.  We refer to this algorithm as algorithm $\mathcal{B}$ (as it follows the B\'el\'ady rule). Since this rule achieves offline optimality, this approach is consistent, i.e. if the predictor is perfect, this algorithm is ex-post optimal. Unfortunately this approach does not have similarly nice performance with respect to the other two desiderata. With respect to robustness, the degradation with the average error of the predictor is far from the best possible, while a completely unreliable predictor leads to unbounded competitive ratios, far from the ones of the best online algorithm.

\begin{proposition}\label{thm:blind_belady}
\vspace{0.1in}
Consider the caching problem $\Pi$ with $n$ requests and cache size $k$, the prediction model $\mathcal{H}$ that predicts the next arrival of a requested element and the absolute loss function $\ell_1$.
The competitive ratio of $\epsilon$-assisted algorithm $\mathcal{B}$ is $\textsc{cr}_{\mathcal{\mathcal{B}},\ell_1}(\epsilon)=\Omega(\epsilon)$.
\end{proposition}
The implication is that when the error of the predictor is much worse than the offline optimum, the competitive ratio becomes unbounded. With respect to robustness, the rate of decay is far from optimal as we will see in Section~\ref{ssec:analysis_pred_marker}.

\begin{proof}[Proof of Proposition~\ref{thm:blind_belady}]
We will show that for every $\epsilon$, there exist a sequence $\sigma$ and a predictor $h$ such that the absolute error $\eta_{1}(h,\sigma)\leq \epsilon \cdot \opt$ while the competitive ratio of algorithm $\mathcal{B}$ is $\frac{\epsilon-1}{2}$. For ease of presentation, assume that $\epsilon>3$. 
Consider a cache of size $k=2$ and three elements $a,b,c$; the initial configuration of cache is $a,c$. The sequence consists of repetitions of the following sequence with $\epsilon$ requests per repetition.
The actual sequence is
$a\underbrace{bcbc\ldots bc}_{\epsilon-1} a\underbrace{bcbc\ldots bc}_{\epsilon-1} \ldots$
($a$ appears once and then $bc$ appears $(\epsilon-1)/2$ times). 

In any repetition, the predictor accurately predicts the arrival time of all elements apart from two: i) when element $a$ arrives, it predicts that it will arrive again two steps after the current time (instead of in the first step of the next repetition) and ii) when $b$ arrives for the last time in one repetition, it predicts it to arrive again in the fourth position of the next repetition (instead of the second). As a result, the absolute error of the predictor is $\epsilon$ ($\epsilon-2$ error in the $a$-misprediction and $2$ error in the $b$-misprediction).
The optimal solution has two evictions
per repetition (one to bring $a$ in the cache and one to directly evict it afterwards). Instead, the algorithm never evicts $a$ as it is predicted to arrive much earlier than all other elements, and therefore has $\epsilon-1$ cache misses. This means that the competitive ratio of this algorithm is $\Omega(\eta_1(h,\sigma)/\opt(\sigma))$ which completes the proof.
\end{proof}

\paragraph{Evicting elements with proven wrong predictions.} The problem in the above algorithm is that algorithm $\mathcal{B}$ keeps too much faith in predictions that have been already proven to be wrong (as the corresponding elements are predicted to arrive in the past). It is tempting to ``fix'' the issue by evicting any element whose predicted time has passed, and evict again the element predicted the furthest in the future if no
such element exists. We call this algorithm $\mathcal{W}$ as it takes care of wrong predictions. Formally, let $h(j,t)$ denote the last prediction about $z_j$ at or prior to time~$t$. At time $t$ algorithm $\mathcal{W}$ evicts an arbitrary element
from the set  $S_t=\crl*{j:h(j,t)<t}$ if $S_t\neq \emptyset$ and $\arg\max_{z_i\in \text{Cache(t)}} h(i,t)$ otherwise. We show that algorithm $\mathcal{W}$ has similarly bad performance guarantees.

\begin{proposition}
\vspace{0.1in}
Consider the caching problem $\Pi$ with $n$ requests and cache size $k$, the prediction model $\mathcal{H}$ that predicts the next arrival of a requested element and the absolute loss function $\ell_1$. The competitive ratio of $\epsilon$-assisted algorithm $\mathcal{W}$ is $\textsc{cr}_{\mathcal{W},\ell_1}(\epsilon)=\Omega(\epsilon)$.
\end{proposition}
\begin{proof}
Consider a cache of size $k=3$ and four elements $a,b,c,d$; the initial configuration of cache is $a,b,c$ and then $d$ arrives. The actual sequence consists of repetitions of the following sequence with $(\epsilon/2)+1$ requests per repetition (for ease of presentation, assume that $\epsilon>6)$. 
The actual sequence $\sigma$ is $d\underbrace{abcabc\ldots abc}_{\nicefrac{\epsilon}{2}}d\underbrace{abcabc\ldots abc}_{\nicefrac{\epsilon}{2}}\ldots$. 

In any repetition, the predictor $h$ accurately predicts the arrival time of element $d$ but always makes mistake in elements $a,b,c$ by predicting them to arrive two time steps earlier. As a result, the absolute error of the predictor is $\epsilon$ (error of $2$ for any of the appearances of $a,b,c$).
The optimal solution has two evictions
per repetition (one to bring element $d$ and one to evict it afterwards). Instead the algorithm always evicts elements $a,b,c$ because they are predicted earlier than their actual arrival and are therefore evicted as ``wrong'' predictions. This means that the competitive ratio of this algorithm is also $\Omega(\eta_{1}(h,\sigma)/\opt(\sigma))$ which completes the proof.
\end{proof}
The latter issue can be again fixed by further modifications of the algorithm but these simple examples demonstrate that, unless taken into account, mispredictions can cause significant inefficiency in the performance of the algorithms.

\paragraph{Beyond blindly trusting the predictor.} The common problem in both examples is that there is an element that should be removed but the algorithm is tricked into keeping it in the cache. To deal with this in practice,  most popular heuristics such as LRU (Least Recently Used) and FIFO (First In First Out) avoid evicting recent elements when some elements have been dormant for a long time. This tends to utilize nice locality properties leading to strong empirical performance (especially for LRU). However, such heuristics impose a strict eviction policy which leads to weak performance guarantees. Moreover, incorporating additional information provided by the predictor becomes complicated. 

Competitive analysis has also built on the idea of evicting dormant elements via developing algorithms with stronger theoretical guarantees such as Marker. In the next subsection, we show how we can incorporate predictions in the Marker algorithm to enhance its performance when the predictions are good while retaining the worst-case guarantees. Interestingly, via our framework, we can provide improved guarantees for the aforementioned heuristics such as LRU, improving their worst-case guarantees while retaining their practical performance (see Section~\ref{ssec:practical_predictive}).

\subsection{Predictive Marker Algorithm}
\label{ssec:predictive_marker}

We now present our main technical contribution, 
a prediction-based adaptation of the Marker algorithm \cite{Fiat1991}. This $\epsilon$-assisted algorithm gets a competitive ratio of $2 \cdot \min(\tledit{1+\sqrt{5\epsilon}}
, 2H_k)$ where $H_k=1+\nicefrac{1}{2}+\dots+\nicefrac{1}{k}$ denotes the $k$-th Harmonic number.

\paragraph{Classic Marker algorithm.} We begin by recalling the Marker algorithm and the analysis of its performance. The algorithm runs in phases. At the beginning of each phase, all elements are unmarked. When an element arrives and is already in the cache, the element is marked. If it is not in the cache, a {\em random unmarked} element is evicted, the newly arrived element is placed in the cache and is marked.  Once all elements are marked and a new cache miss occurs, the phase ends and we unmark all of the elements. 

For the purposes of analysis, an element is called \emph{clean} in phase $r$ if it appears during phase $r$, but does not appear during phase $r-1$. In contrast, elements that also appeared in the previous phase are called \emph{stale}. The marker algorithm has \tledit{a} competitive ratio of $2H_k - 1$ and the analysis is tight~\cite{AchlioptasCN00}. We use a slightly simpler analysis that achieves competitive ratio of $2H_k$ below. 

The crux of the upper bound lies in two claims. The first relates the performance of the optimal offline algorithm to the total number of clean elements $Q$ across all phases. 

\begin{claim}[\cite{Fiat1991}]\label{lem:opt_clean_marker}
Let $Q$ be the number of clean elements. Then the optimal algorithm suffers at least $\nicefrac{Q}{2}$ cache misses.
\end{claim}
The second comes from bounding the performance of the algorithm as a function of the number of clean elements. 
\begin{claim}[\cite{Fiat1991}]\label{lem:random_marker}
Let $Q$ be the number of clean elements. Then the expected number of cache misses of the marker algorithm is $Q\cdot H_k$.
\end{claim}

\paragraph{Predictive Marker.}\label{ssec:algorithm}
The algorithm of ~\cite{Fiat1991} is part of a larger family of {\em marking} algorithms; informally, these algorithms
never evict marked elements when there are unmarked elements present. Any algorithm in this family has a worst-case competitive ratio of $k$. Therefore pairing predictions with a marking algorithm would avoid the pathological examples we saw previously.  

A natural approach is to use predictions for tie-breaking, specifically evicting the element whose predicted next appearance time is furthest in the future. When the predictor is perfect (and has zero error), the stale elements never result in cache misses, and therefore, by Claim~\ref{lem:opt_clean_marker}, the algorithm has a competitive ratio of $2$. On the other hand, by using the Marker algorithm and not blindly trusting the oracle, we can guarantee a worst-case competitive ratio of $k$.

We extend this direction to further reduce the worst-case competitive ratio  to $O(H_k)$. To achieve this, we combine the prediction-based tie-breaking rule with the random tie-breaking rule. Suppose an element $e$ is evicted during the phase. We construct a blame graph to understand the reason why $e$ is evicted; this may happen for two distinct reasons. First, $e$ may have been evicted when a clean element $c$ arrived; in this case, we create a new node $c$ which can be thought as the start of a distinct chain of nodes. Alternatively, it may have been evicted because a stale element $s$ arrived ($s$ was previously evicted in the same phase); in this case, we add a directed edge from $e$ to $s$.
Note that the graph is always a set of chains (paths). The total length of the chains represents the total number of evictions incurred by the algorithm during the phase, whereas the number of distinct chains represents the number of clean elements. We call the lead element in every chain {\em representative} and denote it by $\omega(r, c)$, where $r$ is the index of the phase and $c$ the index of the chain in the phase.

Our modification is simple -- when a stale element arrives, it evicts a new element in a prediction-based manner if the chain containing it has length less than $H_k$. Otherwise it evicts a random unmarked element. 
Looking ahead to the analysis, this switch to uniform evictions results in at most $H_k$ additional elements added to any chain during the course of the phase. This guarantees that the competitive ratio is at most $O(H_k)$ in expectation; we make the argument formal in Theorem~\ref{thm:main}.

The key to the analysis is the fact that the chains are disjoint, thus the interactions between evictions can be decomposed cleanly. We give a formal description of the algorithm in Algorithm \ref{alg:clean_chains}. For simplicity, we drop dependence on $\sigma$ from the notation.

\begin{algorithm}[!h]
\caption{Predictive Marker}
\begin{algorithmic}[1]\label{alg:clean_chains}
\REQUIRE Cache $\mathcal{C}$ of size $k$ initially empty ($\mathcal{C}\leftarrow \emptyset$).
\STATE Initialize phase counter $r\leftarrow 1$, unmark all elements ($\mathcal{M}\leftarrow \emptyset$), and set round $i\leftarrow 1$.
\STATE Initialize clean element counter $q_r\leftarrow 0$ and tracking set $\mathcal{S}\leftarrow \emptyset$.
\STATE Element $z_i$ arrives, and the predictor gives a prediction $h_i$. Save prediction $p(z_i)\leftarrow h_i$.
\IF {$z_i$
results in cache hit or the cache is not full ($z_i\in \mathcal{C}$ or $\abs{\mathcal{C}}<k$) } 
    \STATE Add to cache $C\leftarrow C\cup\{z_i\}$ without evicting any element and go to step 26 
\ENDIF
    \IF {the cache is full and all cache elements are marked ($\abs{\mathcal{M}}=k$) }
    \STATE Increase phase ($r\leftarrow r+1$), initialize clean counter ($q_r\leftarrow 0$), save current cache
    ($\mathcal{C}\rightarrow \mathcal{S}$) as the set of elements that are possibly stale in the new phase, and unmark elements ($\mathcal{M}\leftarrow \emptyset$).
    \ENDIF
    \IF{$z_i$ is a clean element ($z_i\notin\mathcal{S}$)}
    \STATE Increase number of clean elements: $q_r\leftarrow q_r+1$. 
    \STATE Initialize size of new clean chain: $n(r,q_r)\leftarrow 1$.
    \STATE Select to evict unmarked element with highest predicted time: $e=\argmax_{z\in\mathcal{C}-\mathcal{M}}p(z)$.
    \ENDIF
    \IF {$z_i$ is a stale element ($z_i\in\mathcal{S}$)}
    \STATE It is the representative of some clean chain. Let $c$ be this clean chain: $z_i=\omega(r,c)$. 
    \STATE Increase length of the clean chain $n(r,c)\leftarrow n(r,c)+1$.
    \IF {$n(r,c)\leq H_k$}
    \STATE Select to evict unmarked element with highest predicted time: $e=\argmax_{z\in\mathcal{C}-\mathcal{M}}p(z)$.
    \ELSE
     \STATE Select to evict a random unmarked element $e\in \mathcal{C}-\mathcal{M}$.
    \ENDIF
    \STATE Update cache by evicting $e$: $\mathcal{C}\leftarrow \mathcal{C}\cup \{z_i\}-\{e\}$.
    \STATE Set $e$ as representative for the chain: $\omega(r,c)\leftarrow e$. 
    \ENDIF
\STATE Mark incoming element ($\mathcal{M}\leftarrow \mathcal{M}\cup\{z_i\}$), increase round ($i\leftarrow i+1$), and go to step 3.
\end{algorithmic}
\end{algorithm}

\subsection{Analysis}\label{ssec:analysis_pred_marker}
In order to analyze the performance of the proposed algorithm, we begin with a technical definition that captures how slowly a loss function $\ell$ can grow.
\begin{definition}\label{defn:spread}
Let $\mathcal{A}_T$ be the set of all the sequences $A_T = a_1, a_2, \ldots, a_T$ of increasing integers of length $T$, that is $a_1 < a_2 < \ldots < a_T$, and $\mathcal{B}_T$ be the set of all sequences
$B_T = b_1, b_2, \ldots, b_T$ of non-increasing reals of length $T$, $b_1 \geq b_2 \geq \ldots \geq b_T$. For a fixed loss function $\ell$, we define its {\em spread} $S_\ell: \mathbb{N}^+ \to \mathbb{R}^+$ as:
$$S_{\ell} (m) = \min\{T: \forall A_T\in\mathcal{A}_T,B_T\in\mathcal{B}_T: \ell(A_T, B_T) \geq m\}$$
\end{definition}
The spread captures the length of a subsequence that can be predicted in completely reverse order as a function of the error of the predictor with respect to loss function $\ell$. We note that the sequence $B_T$ is assumed to be over reals instead of integers as it corresponds to the outcome of the machine learned predictor and we do not want to unnecessarily restrict the output of this predictor.

The following Lemma instantiates the spread for loss metrics we consider and is proved in the Appendix ~\ref{app:spread_common_losses}. 
\begin{lemma}\label{lem:spread_common_losses}
\vspace{0.1in}
For absolute loss, $\ell_1(A,B) = \sum_i |a_i - b_i|$, the spread of $\ell_1$ is $S_{\ell_1}(m) \leq \sqrt{5m}$. \\
For squared loss, $\ell_2(A, B) = \sum_i (a_i - b_i)^2$, the spread of $\ell_2$ is $S_{\ell_2}(m) \leq \sqrt[3]{14m}.$ 
\end{lemma}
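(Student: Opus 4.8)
The plan is to prove both bounds by \emph{inverting} the definition of spread: to show $S_\ell(m) \le g(m)$ it suffices to exhibit an increasing function $f$ with $\min_{A_T,B_T}\ell(A_T,B_T)\ge f(T)$ for every length $T$, and then to check that $f(g(m))\ge m$. Thus the whole task reduces to lower bounding the \emph{minimum} achievable loss of a length-$T$ pair of sequences, after which the two stated bounds follow by solving $f(T)\ge m$ for $T$. I would first record the only places the hypotheses bite: since $a_1<a_2<\dots<a_T$ are \emph{integers}, consecutive gaps are at least $1$, so $a_j-a_i\ge j-i$ for all $i<j$; and $B_T$ is monotone, so in the binding configuration it runs opposite to $A_T$ (a constant $b$ being the extreme case), which is exactly what prevents $B_T$ from tracking $A_T$ and sending the loss to $0$.

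The combinatorial heart, and the step I expect to be the main obstacle, is a clean per-pair lower bound obtained by pairing index $i$ with $T+1-i$. Writing $u_i=a_i-b_i$ and $v_i=a_{T+1-i}-b_{T+1-i}$, monotonicity of $B_T$ gives $b_i-b_{T+1-i}\ge 0$ and the integer-gap bound gives $a_{T+1-i}-a_i\ge T+1-2i$, so
\[
v_i-u_i=(a_{T+1-i}-a_i)+(b_i-b_{T+1-i})\ge T+1-2i .
\]
By the triangle inequality $|u_i|+|v_i|\ge |v_i-u_i|\ge T+1-2i$ for every $i\le \lfloor T/2\rfloor$, and for squared loss the elementary inequality $u_i^2+v_i^2\ge \tfrac12(v_i-u_i)^2\ge \tfrac12 (T+1-2i)^2$. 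These two per-pair estimates are all the structure we need; the remaining work is bookkeeping of the sums and parities.

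For absolute loss I would sum the per-pair bounds over $i=1,\dots,\lfloor T/2\rfloor$ (the middle term when $T$ is odd only helps), using $\sum_{i=1}^{\lfloor T/2\rfloor}(T+1-2i)\ge (T^2-1)/4$, to conclude $\min_{A_T,B_T}\ell_1\ge (T^2-1)/4$. Solving $(T^2-1)/4\ge m$ gives $T\ge\sqrt{4m+1}$, hence $S_{\ell_1}(m)\le\sqrt{4m+1}$. For squared loss, summing $\tfrac12(T+1-2i)^2$ amounts to summing the squares of the odd integers $1,3,\dots$, giving $\min_{A_T,B_T}\ell_2\ge \tfrac{1}{12}T(T^2-1)$; solving $\tfrac{1}{12}(T^3-T)\ge m$ for $T$ and absorbing the linear term and the integer rounding into a slightly larger constant yields $S_{\ell_2}(m)\le\sqrt[3]{14m}$ (the exact leading constant is $12$, and $14$ supplies the needed slack in the small-$T$ regime).

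Finally I would verify tightness to confirm the exponents and constants are not artifacts of the bound: taking $a_i=i$ and $b_i$ constant equal to the median $(T+1)/2$ attains $\sum_i|i-(T+1)/2|=(T^2-1)/4$ for odd $T$ and $\sum_i(i-(T+1)/2)^2=T(T^2-1)/12$, matching the lower bounds exactly and confirming that the $\sqrt{m}$ and $\sqrt[3]{m}$ growth rates are the correct ones. The only remaining subtleties are the rounding of $\sqrt{4m+1}$ and $\sqrt[3]{14m}$ to an integer $T$ in the definition of $S_\ell$, and the small-$m$ regime of the cubic inversion, both of which are routine.
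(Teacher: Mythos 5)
Your proof is correct and reaches the same bounds, but by a genuinely different route than the paper. The paper first pins down an exact minimizer via an exchange argument (its Lemma \ref{lem:AT_BT}: for odd $T$ one may take $B_T$ constant and $A_T$ consecutive), and then simply evaluates the loss of that extremal pair, obtaining $(T^2-1)/4$ and $(T^3-T)/12$. You instead prove the lower bound $\min_{A_T,B_T}\ell(A_T,B_T)\ge f(T)$ directly for \emph{every} admissible pair, by pairing index $i$ with $T+1-i$ and using $|u_i|+|v_i|\ge|v_i-u_i|\ge T+1-2i$ (resp.\ $u_i^2+v_i^2\ge\tfrac12(v_i-u_i)^2$), then exhibiting the constant-$B$ pair only to certify tightness. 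Your version buys two things: it avoids the slightly informal ``apply the swap recursively'' step in the paper's extremal-pair lemma, and it treats even and odd $T$ uniformly, whereas the paper restricts to odd $T$ and absorbs the resulting parity slack into the constants. The paper's version buys an explicit description of the worst case, which is what lets it claim the constants are essentially exact. One point you should make explicit: your key inequality $b_i-b_{T+1-i}\ge 0$ requires $B_T$ to be non-\emph{increasing}, while the definition of spread as printed says non-decreasing. The printed definition cannot be the intended one (taking $b_i=a_i$ would give zero loss and an infinite spread), and both the application in Theorem \ref{thm:main} (predicted times weakly decreasing against increasing arrival times) and the paper's own exchange argument implicitly use the non-increasing convention, so you have silently adopted the correct reading --- but you should say so rather than attribute $b_i-b_{T+1-i}\ge 0$ to ``monotonicity'' without specifying the direction. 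The remaining rounding issues (integrality of $T$, the $T\ge 3$ requirement in the cubic inversion) are handled with the same level of care as in the paper.
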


We now prove the main theorem of the paper. \begin{theorem}
\label{thm:main}
\vspace{0.1in}
Consider the caching problem $\Pi$ with $n$ requests and cache size $k$, the prediction model $\mathcal{H}$ that predicts the next arrival of a requested element and any
loss function $\ell$ with spread bounded by $S_\ell$ for some function $S_{\ell}$ that is concave in its argument. Then the competitive ratio of $\epsilon$-assisted Predictive Marker \textsc{PM} is bounded by:
$$\textsc{cr}_{\mathcal{\textsc{PM},\ell}}(\epsilon)\leq 2\cdot \min\left(1 + 2S_{\ell}\left(\epsilon
\right), 2H_k\right).$$
\end{theorem}
To prove this theorem, we first introduce an analogue of Claim \ref{lem:random_marker}, which decomposes the total cost into that incurred by each of the chains individually. 

To aid in our analysis, we consider the following marking algorithm, which we call \textsc{SM}  (Special Marking). Initially we simply evict an arbitrary unmarked element. At some point, the adversary designates an arbitrary element not in the cache as {\em special}. For the rest of the  phase, upon a cache miss, if the arriving element is  special, the algorithm evicts a {\em random} unmarked element and designates the evicted element as special. If the arriving element is not special, the algorithm proceeds as before, evicting an arbitrary unmarked element.

\begin{lemma}\label{lem:chain_random_eviction}
\vspace{0.1in}
Using algorithm \textsc{SM}, in expectation at most $H_k$ special elements cause cache misses per phase. 
\end{lemma}
\begin{proof}
Since we use a marking algorithm, the set of elements that are in the cache at the end of each phase is determined by the element sequence $(z_1,z_2,\ldots)$ and is independent of the particular eviction rule among unmarked elements. Fix a phase that begins at time $\tau$. Let $E$ be the set of $k$ distinct elements that arrive in this phase. Note that the arrival of the $k+1$st distinct elements starts a new phase. 

Consider the  time $\tau^{\star}$ that an element is designated special and assume that, at this time, there are $i^{\star}$  special elements. At this point, we define $A\subseteq E$ to be the subset of the initial elements that are unmarked and in the cache; we refer to this set as the candidate special set as they are the only ones that can subsequently get designated as special; the set's initial cardinality is $i^{\star}$. This set is shrinking over time as elements are getting marked or evicted from the cache. Order the elements by the time of their first requst in this phase. 

We now bound the probability of the event $\mathcal{E}_i$ that an element becomes special when it is the $i$-th last element in $A$ (based on the ordering by first arrival). By principle of deferred decisions, we consider the first time that, upon request of a special element, it evicts one of the last $i$ elements in the active set. If this never happens then the event $\mathcal{E}_i$ never occurs.
Otherwise, observe that we select the element to evict uniformly at random, and there exists at least one element in the cache that never appears before the end of the phase. Second, if at any point an element $j$ among the $i-1$ elements in the active set becomes special, the $i$-th element can no longer become special as, at the time that $j$ is requested, $i$ is already marked. The above imply that the probability of the event $\mathcal{E}_i$ is at most:
\begin{equation}
    \label{eqn:prob}
Pr[\mathcal{E}_i] \leq \frac{1}{i+1}.
\end{equation}
Therefore, given Equation \eqref{eqn:prob}, we can bound the expected number of misses caused by special elements as:
$$1 + \sum_{i=1}^{k-1} \frac{1}{i+1} = H_k,$$
where the first term is due to the first special element and the second term is due to events $\mathcal{E}_1$ through $\mathcal{E}_{k-1}$.
\end{proof}
We now provide the lemma that lies in the heart of our robustness property.

\begin{lemma}\label{lem:robustness_chain}
\vspace{0.1in}
For any loss metric $\ell$, any phase $r$, the expected length of any chain is at most $1+\mathcal{S}_{\ell}(\eta_{r,c})$ where $\eta_{r,c}$ is the cumulative error of the predictor on the elements in the chain and $\mathcal{S}_{\ell}$ is the spread of the loss metric.
\end{lemma}
\begin{proof}
The clean element that initiates the clean chain evicts one of the unmarked elements upon arrival. Since it does so based on the B\'el\'ady rule, it evicts the element $s_1$ that is predicted to reappear the latest in the future. If the predictor is perfect, this element will never appear in this phase. If, on the other hand, $s_1$ comes back (is a stale element) let $s_2$ be the element it evicts, which is predicted to arrive the furthest among the current unmarked elements. 

Suppose there are $m$ such evictions: $s_1,s_2,\dots, s_m$. The elements were predicted to arrive in reverse order of their evictions. This is because elements $s_j$ for $j>i$ were unmarked and in the cache when element $s_i$ got evicted; therefore $s_i$ was predicted to arrive later. However, the actual arrival order is the reverse. If $\eta_{r,c}$ is the total error of these elements, setting the actual arriving times as the sequence $A_T$ and the predicted ones as the sequence $B_T$ in the definition of spread (Definition \ref{defn:spread}), it means that $m\leq S_{\ell}(\eta_{r,c})$.
\end{proof}

Combining the two above lemmas, we can obtain a bound on the expected length of any chain.

\begin{lemma}\label{lem:chain_length_bound}
\vspace{0.1in}
For any loss metric $\ell$, any phase $r$, the expected length of any chain is at most $\min(1+2\mathcal{S}_{\ell}(\eta_{r,c}), 2\log k)$ where $\eta_{r,c}$ is the cumulative error of the predictor on the elements in the chain and $\mathcal{S}_{\ell}$ is the spread of the loss metric.
\end{lemma}
\begin{proof}
The proof follows from combining the two above lemmas. By Lemma~\ref{lem:chain_random_eviction}, if the chain switches to random evictions, it incurs another $H_k$ cache misses in expectation after the switch point (and its length increases by the same amount), capping in expectation the total length by $2H_k \leq 2 \log k$. 
If the chain does not switch to random evictions, it has B\'el\'ady evictions and, by Lemma~\ref{lem:robustness_chain}, it incurs at most $\mathcal{S}_{\ell}(\eta_{r,c})$ misses from stale elements. To ensure that the $2\log k$ term dominates the bound when $S_{\ell}(\eta_{r,c})\geq \log k$, we multiply $S_{\ell}(\eta_{r,c})$ by a factor of $2$ in the first term.
\end{proof}

\begin{proof}[Proof of Theorem \ref{thm:main}]
Consider an input $\sigma\in \Pi$ determining the request sequence.
Let $Q$ be the number of clean elements (and therefore also chains). Any cache miss corresponds to a particular eviction in one clean chain; there are no cache misses not charged to a chain by construction. By Lemma \ref{lem:chain_length_bound}, we can bound the evictions from the clean chain $c$ of the $r$-th phase in expectation by $\min(1+2\cdot S_{\ell}(\eta_{r,c}), 2\log k)$. Since both $S_\ell$ and the minimum operator are concave functions, the way to maximize the length of chains is to apportion the total error, $\eta$, equally across all of the chains. Thus  for a given error $\eta$ and number $Q$ of clean chains, the competitive ratio is maximized when the error in each chain is $\eta_{r,c}=\eta/Q$ each. The total number of stale elements is therefore in expectation at most:
$Q\cdot \min(2\cdot S_{\ell}(\nicefrac{\eta}{Q}),2H_k)$.
By Claim \ref{lem:opt_clean_marker}, it holds that $\nicefrac{Q}{2} \leq \opt(\sigma)$, implying the result since 
$\opt(\sigma) \leq Q$. 
\end{proof}
We now specialize the results for the absolute and squared losses.

\begin{corollary}
\label{cor:l1}
\vspace{0.1in}
The competitive ratio of $\epsilon$-assisted Predictive Marker with respect to the absolute loss metric $\ell_1$ is bounded by $\textsc{cr}_{\mathcal{\mathcal{PM}},\ell_1}(\epsilon)\leq \min\prn*{2+2\cdot\sqrt{5\epsilon},4H_k}$.
\end{corollary}

\begin{corollary}
\label{cor:squared}
\vspace{0.1in}
The competitive ratio of $\epsilon$-assisted Predictive Marker with respect to the absolute loss metric $\ell_2$ is bounded by
$\textsc{cr}_{\mathcal{\mathcal{PM}},\ell_1}(\epsilon)\leq \min\prn*{2+2\cdot\sqrt[3]{{14\epsilon}},4H_k}$.
\end{corollary}

\subsection{Tightness of analysis}\label{ssec:rate_robustness}
\paragraph{Robustness rate of Predictive Marker.} We show that our analysis is tight: any marking algorithm that uses the predictor in a deterministic way cannot achieve an improved guarantee with respect to robustness. 

\begin{theorem}
\label{thm:lower}\vspace{0.1in}
Any deterministic $\epsilon$-assisted marking algorithm $\mathcal{A}$, that only uses the predictor in tie-breaking among unmarked elements in a deterministic fashion, has a competitive ratio of $\textsc{cr}_{\mathcal{A,\ell}}(\epsilon)=\Omega\prn*{\min\prn*{\mathcal{S}_{\ell}\prn*{\epsilon},k}}$.
\end{theorem}
\begin{proof}
Consider a cache of size $k$ with $k+1$ elements and any $\epsilon$ such that $S_{\ell}(\epsilon)<k$. We will construct an instance that exhibits the above lower bound. Since $\mathcal{A}$ uses marking, we can decompose its analysis into phases. Let $\sigma$ be the request sequence, and assume that we do not have any repetition of an element
inside the phase; as a result the $i$-th element of phase $r$ corresponds to element $\sigma_{(r-1)k+i}$. 

Suppose the predictor is always accurate on elements 2 through $k-S_{\ell}(\eps)+1$ in each phase.

For the last $S_{\ell}(\eps)-1$ elements of phase $r$ as well as the first element of the of the next phase, the elements are predicted to come again at the beginning of the subsequent phase, at time $t=rk+1$. Since the algorithm is deterministic, we order the elements so that their evictions are in reverse order of their arriving time. By the definition of spread, the error of the predictor in these elements is exactly $\epsilon$ and the algorithm incurs a cache miss in each of them. On the other hand, the offline optimum has only $1$ miss per phase, which concludes the proof. 
\end{proof}

\paragraph{On the rate of robustness in caching.} Theorem~\ref{thm:lower} establishes that the analysis of Predictive Marker is tight with respect to the rate of robustness, and suggests that algorithms that use the predictor in a deterministic manner may suffer from similar rates. 
However, a natural question that comes up is whether a better rate can be achieved using the predictor in a randomized way. We conjecture that a rate of $\log(1+\sqrt{
\epsilon})$ with respect to the absolute loss is possible, similar to the exponential improvement randomized schemes obtain over the deterministic guarantees of $k$ with respect to worst-case competitiveness. In subsequent work, Rohatgi \cite{Rohatgi20} made significant progress towards identifying the correct rate by proving refined upper and lower bounds.

\subsection{Randomized predictors}\label{sec:randomized_predictors}
We now remove the assumption that the predictor $h$ is deterministic and extend the definition of $\epsilon$-accurate predictors (Definition~\ref{defn:epsilon_accurate}) to hold in expectation. The randomness may either come in how the inputs are generated or in the predictions of $h$.
\begin{definition}\vspace{0.1in}
For a fixed optimization problem $\Pi$, let $\opt_{\Pi}(\sigma)$ denote the value of the optimal solution on input $\sigma$. Assume that the predictor is probabilistic and therefore the error of the predictor at $\sigma$ is a random variable $\eta_{\ell}(h,\sigma)$. Taking expectation over the randomness of the precictor, we say that a predictor $h$ is $\epsilon$-accurate in expectation for $\Pi$ if:
$$
\En\brk*{\eta_{\ell}(h,\sigma)}\leq \eps\cdot \opt_{\Pi}(\sigma).
$$
Similarly an algorithm is $\epsilon$-assisted if it has access to an $\epsilon$-accurate predictor in expectation.
\end{definition}

Analogously to the previous part, we can show:

\begin{theorem}
\label{thm:main_random}
\vspace{0.1in}
Consider any loss function $\ell$ with spread bounded by $S_\ell$ for some function $S_{\ell}$ that is concave in its argument. Then the competitive ratio of $\epsilon$-assisted in expectation Predictive Marker \textsc{PM} is bounded by:
$$\textsc{cr}_{PM,\ell}(\epsilon)\leq 2\cdot \min\left(1 + 2S_{\ell}\left(\epsilon\right), 2H_k\right).$$
\end{theorem}
\begin{proof}
For ease of notation assume that the outcomes of the predictors are finite. For each of these potential realizations, we can bound the performance of the algorithm  by Theorem \ref{thm:main}. The proof then follows by applying an additional Jensen's inequality on all the possible realizations due to the concavity of the spread and the min operator.
\end{proof}

\section{Discussion and extensions
}\label{ssec:remarks}
Thus far we have shown how to use an $\epsilon$-accurate predictor to get a caching algorithm with an $O(\sqrt{\epsilon})$ competitive ratio for the absolute loss metric. We now provide a deeper discussion of the main results. 
In Section \ref{ssec:tradeoff}, we give a finer trade-off of competitiveness and robustness. We then discuss some traits that limit the impact of the errors of the predictors in Section \ref{ssec:practical_predictive}.Subsequently, we show that common heuristic approaches, such as LRU, can be expressed as predictors in our framework. This allows us to combine their predictive power with robust guarantees when they fail. Finally, in Section \ref{ssec:black_box}, we provide a black-box way to combine robust and competitive approaches.

\subsection{Robustness vs competitiveness trade-offs.}
\label{ssec:tradeoff}
One of the free parameters in Algorithm \ref{alg:clean_chains} is the length of the chain when the algorithm switches from following the predictor to random evictions. If the switch occurs after chains grow to $\gamma H_k$ in length, this provides a trade-off between competitiveness and robustness. 
\begin{theorem}
\label{thm:tradeoff}
\vspace{0.1in}
Suppose that, for $\gamma>0$, the algorithm uses $\gamma H_k$ as switching point (line 18 in Algorithm~\ref{alg:clean_chains}); denote this algorithm by $\mathcal{PM}(\gamma)$. Let a 
loss function $\ell$ with spread bounded by $S_\ell$ for some function $S_{\ell}$ that is concave in its argument. Then
the  competitive ratio of $\epsilon$-assisted  $\mathcal{PM}(
\gamma)$ is bounded by:
$$\textsc{cr}_{\mathcal{PM}(\gamma),\ell}(\epsilon)\leq 2\cdot \min\left(1 + \frac{1+\gamma}{\gamma}S_{\ell}\left(\epsilon\right), \gamma H_k, k\right).$$
\end{theorem}
\begin{proof}
The proof follows the proof of Theorem~\ref{thm:main} but slightly changes the Lemma~\ref{lem:chain_random_eviction} to account for the new switching point. In particular, with respect to the second term, the expected length of each clean chain is at most $H_k$ after the switching point, and, at most $\gamma H_k$ before the switching point by construction. 

With respect to the robustness term, the length of each clean chain before the switch is bounded by the spread of the metric on this subsequence. Since the total length is in expectation at most $(1+\gamma)/\gamma$ higher, we need to adjust the first term accordingly.

Finally, the length of its clean chain is at most $k$ regardless of the tie-breaking since we are using marking which provides the last term.
\end{proof}

Let us reflect on the above guarantee. When $\gamma\rightarrow 0$ then the algorithm is more conservative (switching to random evictions earlier); this reduces the worst-case competitive ratio but at the cost of abandoning the predictor unless it is extremely accurate. On the other hand, setting $\gamma$ very high makes the algorithm trust the predictor more, reducing the competitive ratio when the predictor is accurate at the expense of a worst guarantee when the predictor is unreliable.

\subsection{Practical traits of Predictive Marker}\label{ssec:practical_predictive}

\paragraph{Locality.} The guarantee in Theorem \ref{thm:main} bounds the competitive ratio as a function of the quality of the prediction. One potential concern is that if the predictions have of a small number of very large errors, then the applicability of Predictive Marker may be quite limited. 

Here we show that this is not the case. Due to the phase-based nature of the analysis, the algorithm essentially ``resets'' at the end of every phase, and therefore the errors incurred one phase do not carry over to the next. Moreover, the competitive ratio in every phase is bounded by $O(H_k)$. 

Formally, for any sequence $\sigma$, we can define phases that consist of exactly $k$ distinct elements. Let $\textsc{cl}(r,\sigma)$ be the number of clean elements in phase $r$ of sequence $\sigma$, and let $\eta_{\ell, r}(h, \sigma)$ denote the error of predictor $h$ restricted only to  elements occurring in phase $r$. 

\begin{theorem}\label{thm:locality}
\vspace{0.1in}
Consider a loss function $\ell$ with spread $S_\ell$. If $S_{\ell}$ is concave, the competitive ratio of Predictive Marker $PM$ at sequence $\sigma$ when assisted by a predictor $h$ is at most:
$$\textsc{cr}_{\mathcal{\mathcal{PM},\ell}}\leq \frac{\sum_r \textsc{cl}(r,\sigma)\cdot \min\prn*{1+2S_{\ell}(\eta_{\ell,r}(h,\sigma), 2H_k}}{\sum _r \textsc{cl}(r,\sigma)}$$
\end{theorem}
\begin{proof}
The proof follows directly from Lemma~\ref{lem:chain_length_bound} and applying Jensen's inequality only within the chains of the phase (instead of also across phases as we did in Theorem~\ref{thm:main}).
\end{proof}
This theorem illustrates a nice property of our algorithm. If the predictor $h$ is really bad for a period of time (i.e. its errors are localized) then the clean chains of the corresponding phases will contribute the second term (the logarithmic worst-case guarantee) but the other phases will provide enhanced performance utilizing the predictor's advice. In this way, the algorithm adapts to the quality of the predictions, and bad errors do not propagate beyond the end of a phase. 
This quality is very useful in caching where most patterns are generally well predicted but there may be some unforeseen sequences.

\paragraph{Robustifying LRU.}
Another practical property of our algorithm is that it can seamlessly incorporate heuristics that are known to perform well in practice. In particular, the popular Least Recently Used (LRU) algorithm can be expressed within the Predictive Marker framework. Consider the following predictor, $h$: when an element $\sigma_i$ arrives at time $i$, the LRU predictor predicts next arrival time $h(\sigma_i)=-i$. 

Note that, by doing so, at any point of time, among the elements that are in the cache, the element that is predicted the furthest in the future is exactly the one that has appeared the least recently. Also note that any marked element needs to have arrived later than any unmarked element. As a result, if we never switched to random evictions (or had $k$ in the RHS of line 18 in Algorithm~\ref{alg:clean_chains}), the Predictive Marker algorithm assisted with the LRU predictor is exactly the LRU algorithm. 

The nice thing that comes from this observation is that we can robustify the analysis of LRU. LRU, and its variants like LRU(2), tend to have very good empirical performance as using the recency of requests is a good predictor about how future requests will arise. However, the worst-case guarantee of LRU is unfortunately $\Theta(k)$ since it is a deterministic algorithm. By expressing LRU as a predictor in the Predictive Marker framework and using a switching point of $H_k$ for each clean chain, we exploit most of this predictive power while also guaranteeing a logarithmic worst-case bound on it. 

\subsection{Combining robustness and competitiveness in a black-box manner}\label{ssec:black_box}
In the previous section, we showed how we can slightly modify a classical competitive algorithm to ensure that it satisfies nice consistency and robustness properties when given access to a good predictor, while retaining the worst-case competitiveness guarantees otherwise. In this part, we show that, in fact, achieving the requirements individually is enough. In particular, we show a black-box way to combine an algorithm that is robust and one that is worst-case competitive.  This reduction leads to a slightly worse bound, but shows that proving the robustness property (i.e. a graceful degradation with the error of the predictor) is theoretically sufficient to augment an existing worst-case competitive algorithm.

\begin{theorem}\label{thm:black_box}
\vspace{0.1in}
For the caching problem, let $A$ be an $\alpha$-robust algorithm and $B$ a $\gamma$-competitive algorithm. We can then create a black-box algorithm $ALG$ that is both $9\alpha$-robust and $9\gamma$-competitive.
\end{theorem}
\begin{proof}
We proceed by simulating $A$ and $B$ in parallel on the dataset, and maintaining the cache state and the number of misses incurred by each. Our algorithm switches between following the strategy of $A$ and the strategy of $B$. Let $c_t(A)$ and $c_t(B)$ denote the cost (number of misses) of $A$ and $B$ up to time $t$. Without loss of generality, let $ALG$ begin by following strategy of $A$; it will do so until a time $t$ where $c_t(A) = 2 \cdot c_t(B)$. At this point $ALG$ switches to following the eviction strategy of $B$, doing so until the simulated cost of $B$ is double that of $A$: a time $t'$ with $c_{t'}(B) = 2\cdot c_{t'}(A)$. At this point it switches back to following eviction strategy of $A$, and so on. When $ALG$ switches from $A$ to $B$, the elements that $A$ has in cache may not be the same as those that $B$ has in the cache. In this case, it needs to reconcile the two. However, this can be done lazily (at the cost of an extra cache miss for every element that needs to be reconciled). To prove the bound on the performance of the algorithm, we next show that $c_t(ALG) \leq 9 \cdot \min(c_t(A), c_t(B))$ for all $t$. We decompose the cost incurred by $ALG$ into that due to following the different algorithms, which we denote by $f_t(ALG)$, and that due to reconciling caches, $r_t(ALG)$. 

We prove a bound on the following cost $f_t$ by induction on the number of switches. Without loss of generality, suppose that at time $t$, $ALG$ switched from $A$ to $B$, and at time $t'$ it switches from $B$ back to $A$. By induction, suppose that $f_t(ALG) \leq 3 \min(c_t(A), c_t(B)) = 3 c_t(B)$, where the equality follows since $ALG$ switched from $A$ to $B$ at time $t$. In both cases, assume that caches are instantly reconciled. Then:
\begin{align*}
    f_{t'}(ALG) &= f_{t}(ALG) + (c_{t'}(B) - c_{t}(B)) \\
    &= f_{t}(ALG) + 2c_{t'}(A) - \nicefrac{1}{2}c_{t}(A) \\
    &\leq 3c_{t}(B) + 2(c_{t'}(A) -c_{t}(A)) + \nicefrac{3}{2} \cdot c_{t}(A) \\
    &=3c_{t}(A) + 2(c_{t'}(A) - c_{t}(A)) \\
    &\leq 3c_{t'}(A) \\
    &=3 \min(c_{t'}(A), c_{t'}(B))
\end{align*}
What is left is to bound the following cost for the time since the last switch. Let $s$ denote the time of the last switch and, assume without loss of generality that it was done from $A$ to $B$. Let  $s'$ denote the last time step. By the previous set of inequalities (changing the second equation to inequality) and the fact that the algorithm never switched back to $A$ after $s$, it holds that $f_{s'}(ALG)\leq 3c_{s'}(A)\leq 6\min(c_{s'}(A),c_{s'}(B))$.

To bound the reconciliation cost, assume the switch at time $t$ is from $A$ to $B$. We charge the reconciliation of each element in $B \setminus A$ to the cache miss when the element was last evicted by $A$. Therefore the overall reconciliation cost is bounded by $r_t(ALG)\leq c_{t}(A) + c_{t}(B) \leq 3\min(c_{t} (A), c_{t} (B)$.
\end{proof}
Observe that the above construction can extend beyond caching and applies to any setting where we can bound the cost that the algorithm needs to incur to reconcile the states of the robust and the worst-case competitive algorithm. In particular, this occurs in the more general $k$-server problem.

\begin{remark}\label{rem:fiat_reference}
\vspace{0.1in}
The above construction is similar to that of Fiat et al.~\cite{FiatRabaniRavid94} who showed how to combine multiple competitive algorithms. In subsequent work, Antoniadis et al.~\cite{AntoniadisCoesterEliasPolakSimon20} relied on a similar construction to provide results for metrical task systems under a different prediction model.
\end{remark}

\section{Experiments}
\label{sec:empirical}

In this section we evaluate our approach on real world datasets, empirically demonstrate its dependence on the errors in the oracle, and compare it to standard baselines.

\paragraph{Datasets and Metrics}
We consider two datasets taken from different domains to demonstrate the wide applicability of our approach. 
\begin{itemize}
\item \texttt{BK} is data extracted from BrightKite, a now defunct social network. We consider sequences of checkins, and extract the top $100$ users with the longest non-trivial check in sequences---those where the optimum policy would have at least $50$ misses. This dataset is publicly available at ~\cite{Cho11,Brightkite}. Each of the user sequences represents an instance of the caching problem.  
\item \texttt{Citi} is data extracted from CitiBike, a popular bike sharing platform operating in New York City. We consider citi bike trip histories, and extract stations corresponding to starting points of each trip. We create 12 sequences, one for each month of 2017 for the New York City dataset. We consider only the first 25,000 events in each file. This data is publicly available at ~\cite{CitiData}.  
\end{itemize}

We give some additional statistics about each datasets in Table \ref{tab:stats}. 

\begin{table}[t!]
\centering
\begin{tabular}{c|c|c|c}
     Dataset & Num Sequences & Sequence Length & Unique Elements  \\
     \hline
     \texttt{BK} & 100 & 2,101 & 67-- 800 \\
     \texttt{Citi} & 24 & 25,000 & 593 -- 719 \\
\end{tabular}
\caption{Number of sequences; sequence length;  min and max number of elements for each dataset.}
\label{tab:stats}
\end{table}

Our main metric for evaluation will be the {\em competitive ratio} of the algorithm, defined as the number of misses incurred by the particular strategy divided by the optimum number of misses.

\paragraph{Predictions}
We run experiments with both synthetic predictions to showcase the sensitivity of our methods to learning errors, and with preditions using an off the shelf classifier, published previously~\cite{Anderson2014}.

\begin{itemize}
    \item {\bf Synthetic Predictions.}  For each element, we first compute the true next arrival time, $y(t)$, setting it to $n+1$ if it does not appear in the future. To simulate the performance of an ML system, we set $h(t) = y(t) + \epsilon$, where $\epsilon$ is drawn i.i.d. from a lognormal distribution with mean parameter $0$ and standard deviation $\sigma$. We chose the lognormal distribution of errors to showcase the effect of rare but large failures of the learning algorithm. Finally, observe that since we only compare the relative predicted times for each method, adding a bias term to the predictor would not change the results. 
    \item {\bf PLECO Predictions.} In their work Anderson et al.~\cite{Anderson2014} developed a simple framework to model repeat consumption, and published the parameters of their PLECO (Power Law with Exponential Cut Off) model for the BrightKite dataset. While their work focused on predicting the relative probabilities of each element (re)appearing in the subsequent time step, we modify it to predict the next time an element will appear. Specifically, we set $h(t) = t + 1/{p(t)}$, where $p(t)$ represents the probability that element that appeared at time $t$ will re-appear at time $t+1$. 
\end{itemize}

\paragraph{Algorithms}
We consider multiple algorithms for evaluation. 
\begin{itemize}
    \item {\bf LRU} is the Least Recently Used policy that is wildly successful in practice. 
    \item {\bf Marker} is the classical Marker algorithm due to Fiat et al.~\cite{Fiat1991}.
    \item {\bf PredictiveMarker} is the algorithm we develop in this work. We set the switching cost to $k$, and therefore never switch to random evictions.
    \item {\bf Blind Oracle} is the algorithm $\mathcal{B}$ described in Section \ref{ssec:blind_oracle_lower_bounds_robustness}, which evicts the element predicted to appear furthest in the future.  
\end{itemize}

\subsection{Results}

\begin{figure}[t!]
\centering
\includegraphics[scale=0.28]{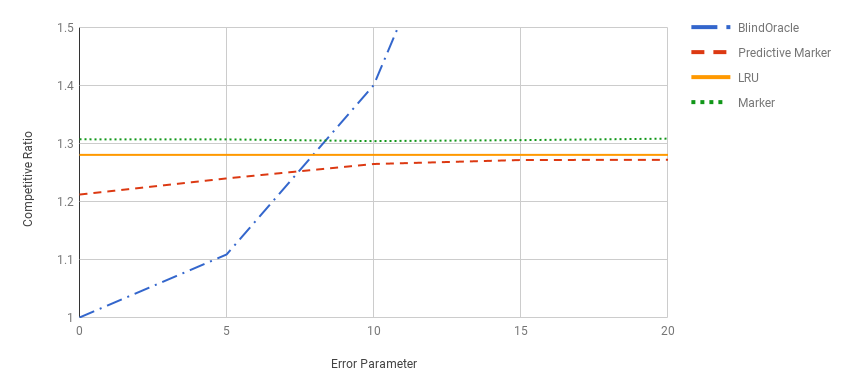}
\caption{Ratio of average number of evictions as compared to optimum for varying levels of oracle error.}
\label{fig:results}
\end{figure}

We set $k = 10$, and summarize the synthetic results on the \texttt{BK} dataset in Figure \ref{fig:results}. Observe that the performance of Predictive Marker is consistently better than LRU and standard Marker, and degrades slowly as the average error increases, as captured by the theoretical analysis. Second, we empirically verify that blindly following the oracle works well when the error is very low, but quickly becomes incredibly costly. 

\begin{table}[t!]
    \centering
    \begin{tabular}{c|c|c}
        Algorithm & Competitive Ratio on BK & Competitive Ratio on Citi \\ 
        \hline
        Blind Oracle & 2.049 & 2.023 \\
        LRU & 1.280 & 1.859 \\
        Marker & 1.310 & 1.869 \\
        Predictive Marker & 1.266 & 1.810 \\
    \end{tabular}
    \caption{Competitive Ratio using PLECO model.}
    \label{tab:results}
\end{table}

The results using the PLECO predictor are shown in Table \ref{tab:results}, where we keep $k = 10$ for the \texttt{BK} dataset and set $k = 100$ for \texttt{Citi}; we note that the ranking of the methods is not sensitive to the cache size, $k$. We can again see that the Predictive Marker algorithm outperforms all others, and is 2.5\% better than the next best method, LRU. While the gains appear modest, we note they are statistically significant at $p < 0.001$. Moreover, the off-the-shelf PLECO model was not tuned or optimized for predicting the {\em next} appearance of each element.

In that regard, the large difference in performance between using the predictor directly (Blind Oracle) and using it in combination with Marker (Predictive Marker) speaks to the power of the algorithmic method. By considering only the straightforward use of the predictor in the Blind Oracle setting, one may deem the ML approach not powerful enough for this application; what we show is that a more judicious use of the same model can result in tangible and statistically significant gains. 

\section{Conclusion}
\label{sec:conclusions}
In this work, we introduce the study of online algorithms with the aid of machine learned predictors. This combines the empirical success of machine learning with the rigorous guarantees of online algorithms. We model the setting for the classical caching problem and give an oracle-based algorithm whose competitive ratio is directly tied to the accuracy of the machine learned oracle. 

Our work opens up two avenues for future work. On the theoretical side, it would be interesting to see similar predictor
-based algorithms for other online settings such as the $k$-server problem; this has already led to a fruitful line of current research as we discussed in Section~\ref{ssec:related}. On the practical side, our caching algorithm shows how we can use machine learning in a safe way, avoiding problems caused by rare wildly inaccurate predictions. At the same time, our experimental results show that even with simple predictors, our algorithm provides an improvement compared to LRU. In essence, we have reduced the worst case performance of the caching problem to that of finding a good (on average) predictor. This opens up the door for practical algorithms that need not be tailored towards the worst-case or specific distributional assumptions, but still yield provably good performance.

\subsection*{Acknowledgements}
The authors would like to thank Andr\'es Mu\~noz-Medina and \'Eva Tardos for valuable discussions on the presentation of the paper, Shuchi Chawla and Seffi Naor for useful feedback regarding Section~\ref{ssec:rate_robustness}, Ola Svensson for suggesting the locality extension (Theorem~\ref{thm:locality})
as well as an anonymous reviewer for pointing towards the direction of Theorem \ref{thm:black_box}.

\bibliographystyle{alpha}
\bibliography{bib1}

\appendix

\section{Proof of Lemma \ref{lem:spread_common_losses}}\label{app:spread_common_losses}
In this section, we provide the proof of the lemma connecting spread to absolute and squared loss. Before doing so, we provide a useful auxiliary lemma.

\begin{lemma}\label{lem:AT_BT}
\vspace{0.1in}
For odd $T=2n+1$, one pair $(A_T,B_T)$ minimizing either absolute or squared loss subject to the constraints of the spread definition is $A_{2n+1}=(0\dots 2n)$ and $B_T=(n\dots n)$.
\end{lemma}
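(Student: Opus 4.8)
The plan is to decouple the minimization over the two sequences: for a fixed feasible $A_T$ I first determine the best companion $B_T$, and only then optimize over $A_T$ itself.

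First I would fix a strictly increasing integer sequence $A_T$ and minimize $\sum_i \ell(a_i,b_i)$ over the admissible $B_T$. The unconstrained minimizer would take $b_i=a_i$, but this is infeasible: $A_T$ is strictly increasing whereas (as the spread is used in the analysis of Theorem~\ref{thm:main}, where predicted times decrease while true arrival times increase) $B_T$ is monotone in the opposite direction, so it cannot track $A_T$. The key claim is that the optimal $B_T$ is then a \emph{constant} sequence $b_i\equiv\mu$. This is precisely the statement that the isotonic projection of a strictly monotone sequence onto the cone of oppositely-monotone sequences pools every coordinate into a single block (the Pool-Adjacent-Violators argument). Once $B_T$ is forced to be constant, the scalar $\mu$ that minimizes $\sum_i\ell(a_i,\mu)$ is the median of $\{a_i\}$ for $\ell_1$ and the mean for $\ell_2$.

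Next I would minimize the resulting loss $f(A_T)=\sum_i\ell(a_i,\mu)$ over strictly increasing integer sequences; intuitively $f$ measures how tightly $A_T$ clusters about its centre, so consecutive integers should be optimal. For $\ell_1$, writing $f(A_T)=\sum_{i=1}^n\big(a_{n+1}-a_{n+1-i}\big)+\sum_{i=1}^n\big(a_{n+1+i}-a_{n+1}\big)$ and using the integrality gap $a_j-a_{j'}\ge j-j'$ yields $f(A_T)\ge 2\sum_{i=1}^n i=n(n+1)$, with equality exactly when $A_T$ is consecutive. For $\ell_2$, I would use the identity $\sum_i(a_i-\bar a)^2=\tfrac1T\sum_{i<j}(a_i-a_j)^2$ together with $a_j-a_i\ge j-i$ to see that consecutive integers minimize the variance. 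Since the loss depends only on the differences $a_i-\mu$, translation invariance lets me centre the sequence so that $\mu=n$, giving $A_{2n+1}=(0,1,\dots,2n)$.

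Finally, because $(0,1,\dots,2n)$ is symmetric about $n$, its median and its mean coincide and both equal $n$; hence $B_T=(n,\dots,n)$ is simultaneously optimal under absolute and squared loss, which proves the lemma for both. I expect the main obstacle to be the first step---rigorously establishing that the best feasible $B_T$ is constant and equal to the median (resp.\ mean), i.e.\ the isotonic-regression part---since once $B_T$ is known to be constant the optimization over $A_T$ reduces to the clean comparison against consecutive integers via $a_j-a_i\ge j-i$.
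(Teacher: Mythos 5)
Your proposal is correct and follows essentially the same route as the paper: first show that the optimal $B_T$ collapses to a constant sequence via an adjacent-pooling (PAV-style) exchange argument, then show that consecutive integers minimize the resulting one-centre loss using $a_j-a_i\ge j-i$, and finally translate so the centre is $n$. The only cosmetic difference is that you make the median/mean characterization of the constant explicit and note (correctly, as the application in Theorem~\ref{thm:main} requires) that $B_T$ must be monotone oppositely to $A_T$, which the paper's exchange step also implicitly assumes.
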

\begin{proof}
First we show that there exists a $B_T$ minimizing the loss with $b_i=b_j$ for all $i,j$. Assume otherwise; then there exist two subsequent $i,j$ with $b'_i> b'_j$. Since $a_i<a_j+1$ by the assumption on spread, $\min_{x\in{b_i,b_j}}\{\ell(a_i,b)+\ell(a_j,b)\}\leq \ell(a_i,b_i)+\ell(a_j,b_j)$. Applying this recursively, we conclude that such a $B_T$ exists.

Second, we show that there exist an $A_T$ that consists of elements $a_{i+1}=a_i+1$. Since the elements of $B_T$ are all equal to $b$, the sequence $\sum_{i=0}^{2n} \ell(a_i,b)$ is minimized for both absolute and squared loss when $a_{i}=b+i-n$.

Last, the exact value of $b$ does not make a difference and therefore we can set it to be $b=n$ concluding the lemma.
\end{proof}

\textbf{Lemma \ref{lem:spread_common_losses} restated:}
For absolute loss, $\ell_1(A,B) = \sum_i |a_i - b_i|$, the spread of $\ell_1$ is $S_{\ell_1}(m) \leq \sqrt{5m}$. \\
For squared loss, $\ell_2(A, B) = \sum (a_i - b_i)^2$, the spread of $\ell_2$ is $S_{\ell_2}(m) \leq \sqrt[3]{14m}.$ 

\begin{proof}
It will be easier to restrict ourselves to odd $T=2n+1$ and also assume that $T\geq 3$. This will give an upper bound on the spread (which is tight up to small constant factors). By Lemma \ref{lem:AT_BT},
 a pair of sequence minimizing absolute/squared loss is $A_T=(0,\dots,2n)$ and $B_T=(n,\dots,n)$. 
We now provide bounds on the spread based on this sequence, that is we find a $T=2n+1$ that
satisfies the inequality $\ell(A_T,B_T)\leq m$.

\emph{Absolute loss:} The absolute loss of the above sequence is:
$$
\ell(A_T,B_T)=2\cdot \sum_{j=1}^n j= 2\cdot \frac{n(n+1)}{2}=n(n+1)= \frac{T-1}{2}\cdot \frac{T+1}{2}=\frac{T^2-1}{4}.
$$
A $T$ that makes $\ell(A_T,B_T)\geq m$ is $T=\sqrt{4m+1}$. Therefore, for absolute loss $S_{\ell}(m)\leq \sqrt{5m}$, since $m\geq 1$

\emph{Squared loss:} The squared loss of the above sequence is:
$$
\ell(A_T,B_T)=2\cdot\sum_{j=1}^n j^2=2\cdot \frac{n(n+1)(2n+1)}{6}=\frac{(T^2-1)\cdot T}{12}=\frac{T^3-T}{12}\geq \frac{8T^3}{9\cdot 12}=\frac{2T^3}{27}
$$
where the inequality holds because $T\geq 3$.

A $T$ that makes $\ell(A_T,B_T)\geq m$ is $T=\sqrt[3]{14m}$. Therefore, for squared loss $S_{\ell}(m)\leq \sqrt[3]{14m}$.
 \end{proof}

\end{document}